\def\openone{\leavevmode\hbox{\small1\kern-3.3pt\normalsize1}}
\def\im{\mbox{Im\,}}
\def\span{\mbox{span\,}}
\def\diag{\mbox{diag\,}}
\def\const{\mbox{const\,}}
\def\ad{\mbox{ad\,}}
\def\tr{\mbox{tr\,}}
\def\bbbz{\mathbb{Z}}
\def\im{\mbox{Im\;}}
\def\hgt{\mbox{ht\;}}
\def\bq{{\bf q}}
\newtheorem{theorem}{Theorem}
\newtheorem{lemma}{Lemma}
\newtheorem{proof}{Proof}
\newtheorem{remark}{Remark}
\title{On Kaup-Kupershchmidt type equations and their soliton solutions}
\author{ Vladimir S. Gerdjikov $^1$ \\
{\it $^{1}$ Institute for Nuclear Research and Nuclear Energy } \\
{\it Bulgarian Academy of Sciences,} \\ {\it72 Tsarigradsko chaussee, 1784 Sofia, BULGARIA }\\
 }
\begin{document}

\maketitle

\begin{abstract}
We start with the Lax representation for the Kaup-Kupersschmidt  equation (KKE). We
We outline the deep relation between the scalar Lax operator and the matrix Lax operators
related to Kac-Moody algebras. Then we derive the MKdV equations gauge equivalent to the KKE.
Next we outline the symmetry and the spectral properties of the relevant Lax operator.
Using the dressing Zakharov-Shabat method we demonstrate that the MKdV and KKE have
two types of one-soliton solutions and briefly comment on their properties.
\end{abstract}

\section{Introduction}

In 1980 D. Kaup and Satsuma \cite{Kaup,SatsKau} analyzed integrable NLEE related to
the scalar third-order operator:
\begin{equation}\label{eq:Ka1}
\mathcal{L} {\boldmath \psi}\equiv \frac{\partial ^3 {\boldmath \psi}}{ \partial x^3 } + 6Q(x,t) \frac{\partial {\boldmath \psi}}{ \partial x }
+ (6R(x,t) - \lambda^3){\boldmath \psi} =0,
\end{equation}
and demonstrated that it allows two interesting reductions each giving rise  to an interesting generaliztion of KdV \cite{CauDodGib}:
\begin{equation}\label{eq:Ka1re}
 \mbox{A)} \qquad R=0, \qquad  \mbox{B)} \qquad R= \frac{1}{2}\frac{\partial Q}{ \partial x }.
\end{equation}
In the first case the operator $\mathcal{L}_{\rm A}$ combined with
\begin{equation}\label{eq:MA1}
\mathcal{M}_{\rm A}  {\boldmath \psi} \equiv \frac{\partial {\boldmath \psi}}{ \partial t } - \left(9\lambda^3 - 18 \frac{\partial^2 Q}{ \partial x^2 }\right)
\frac{\partial ^2 {\boldmath \psi}}{ \partial x^2 } - 6 \left( \frac{\partial ^2 Q}{ \partial x^2 } - 6Q^2\right) \frac{\partial {\boldmath \psi}}{ \partial x }
- 36 \lambda^3 Q {\boldmath \psi}=0,
\end{equation}
leads to Sawada-Kotera equation \cite{SatsKau,Kaup,SawKot}:
\begin{equation}\label{eq:Ka2}
\frac{\partial Q}{ \partial t }  + \frac{\partial^5 Q }{ \partial x^5 } + 30 \left( \frac{\partial ^3 Q}{ \partial x^3 } Q + \frac{\partial^2 Q }{ \partial x^2 }
\frac{\partial Q}{ \partial x } \right) + 180 \frac{\partial Q}{ \partial x } Q^2 =0.
\end{equation}
Here and below we have replaced Kaup's spectral parameter $\lambda$ by $\lambda^3$.
In the second case the Lax operator $\mathcal{L}_{\rm B}$ with
\begin{equation}\label{eq:MB1}
\mathcal{M}_{\rm B}  {\boldmath \psi} \equiv \frac{\partial {\boldmath \psi}}{ \partial t } - 9\lambda^3 \frac{\partial ^2 {\boldmath \psi}}{ \partial x^2 }
+ 3 \left( \frac{\partial ^2 Q}{ \partial x^2 } + 12Q^2\right) \frac{\partial {\boldmath \psi}}{ \partial x } - 3\left( \frac{\partial^3 Q }{ \partial x^3 }
+ 12 \lambda^3 Q \lambda+ 24 \frac{\partial Q}{ \partial x } Q \right) {\boldmath \psi}=0,
\end{equation}
provide the Lax representation for
\begin{equation}\label{eq:KKe}
\frac{\partial Q}{ \partial t }  + \frac{\partial^5 Q }{ \partial x^5 } + 30 \left( \frac{\partial ^3 Q}{ \partial x^3 } Q + \frac{5}{2} \frac{\partial^2 Q }{ \partial x^2 }
\frac{\partial Q}{ \partial x } \right) + 180 \frac{\partial Q}{ \partial x } Q^2 =0.
\end{equation}
known today as the Kaup-Kuperschmidt equation \cite{Kaup,Kuper}.

It is easy to check that there is no elementary change $\{ {\boldmath \psi}, x,t \} \to \{\tilde{\boldmath \psi}, \tilde{x}, \tilde{t}\} $ which
could transform eq. (\ref{eq:Ka2}) into (\ref{eq:KKe}). Both these equations are inequivalent also to the KdV5 equation, which takes the
form:
\begin{equation}\label{eq:KdV5}
\frac{\partial Q}{ \partial t }  + \frac{\partial^5 Q }{ \partial x^5 } + 30 \left( \frac{\partial ^3 Q}{ \partial x^3 } Q + 2 \frac{\partial^2 Q }{ \partial x^2 }
\frac{\partial Q}{ \partial x } \right) + 270 \frac{\partial Q}{ \partial x } Q^2 =0.
\end{equation}

The answer to the question why these three equations, so similar, are inequivalent was soon discovered.
It can be traced back to two seminal papers. The first one is by Mikhailov \cite{Mik} who introduced the notion of
the reduction group and discovered the integrability of the 2-dimensional Toda field theories (TFT). In fact, the equations (\ref{eq:Ka2})
and (\ref{eq:KKe}) belong to the hierarchies of NLEE containing 2-dimensional TFT related to the algebra $sl(3)$ but
requiring different inequivalent reductions.

In the other important paper Drinfeld and Sokolov \cite{DriSok} demonstrated the deep relations between the scalar operators
of the form (\ref{eq:Ka1}) and the first order matrix ordinary differential operators with deep reductions:
\begin{equation}\label{eq:L1}
L \psi \equiv i \frac{\partial \psi}{ \partial x } + U(x,t,\lambda) \psi(x,t,\lambda)=0, \qquad U(x,t,\lambda)=(q(x,t) - \lambda J),
\end{equation}
related to the Kac-Moody algebras. In fact, imposing the condition that $U(x,t,\lambda) = q(x,t) -\lambda J$ belongs to a
certain Kac-Moody algebra is equivalent to imposing on $U(x,t,\lambda)$ the relevant reduction in the sense of Mikhailov.
However, even until now the KKE and Sawada-Kotera equations still attract attention, especially concerning the
construction and the properties of their solutions, see \cite{Fuchs,Bagder,Das,DyParker,ForGib,Oiler,Parker,Parker2,Popov}

The paper is organized as follows. In Section 2 we describe the gauge transformations that underly the deep
relation between the scalar Lax operator (\ref{eq:Ka1}) and its equivalent (\ref{eq:LL}) and (\ref{eq:L23}).
In Section 3 we construct the Lax representation for the MKdV equations gauge equivalent to KKE. In the next
Section 4 we introduce the fundamental analytic solutions (FAS) of $L$ (\ref{eq:LL}) and the relevant Riemann-Hilbert
problem which underlies the inverse scattering problem for $L$. In Section 5 we apply the dressing Zakharov-Shabat method
\cite{ZaSha} for deriving the soliton solutions of the MKdV. In Section 6 we demonstrate that the poles of the dressing factor
and its inverse are in fact discrete eigenvalues of the Lax operator $L$. We end with discussion and conclusions.

\section{Preliminaries}

\subsection{From scalar to matrix operators}

First we will demonstrate how one can relate to a scalar ordinary differential operator
 a first order matrix ordinary operator \cite{DriSok}. We will work this out on the example of the
operator $\mathcal{L}$ (\ref{eq:Ka1}). Indeed, let us consider the 3-component vector
$\tilde{\psi}(x,t,\lambda) = (\boldmath{\psi}_{xx} + 6Q\boldmath{\psi}, \boldmath{\psi}_x, \boldmath{\psi})^T $
and let us differentiate it with respect to $x$. Using equation (\ref{eq:Ka1}) one easily finds that the third order
scalar operator $\mathcal{L}$ is equivalent to:
\begin{equation}\label{eq:Ltil}\begin{aligned}
\tilde{L}_{(1)} \tilde{\psi}(x,t,\lambda) &= \frac{\partial \tilde{\psi}}{ \partial x }  + \tilde{U}_{(1)}(x,t,\lambda) \tilde{\psi}(x,t,\lambda) =0,
&\quad \tilde{U}_{(1)}(x,t,\lambda) &= \tilde{q}_{(1)}(x,t) - \mathcal{J}_{(1)}(\lambda), \\
\tilde{q}_{(1)}(x,t) &= \left(\begin{array}{ccc} 0 & 0 & 6 (R - Q_x) \\ 0 & 0 & 6Q \\ 0 & 0 & 0 \end{array}\right), &\quad
\mathcal{J}_{(1)}(\lambda) &= \left(\begin{array}{ccc} 0 & 0 & \lambda^3 \\ 1 & 0 & 0 \\ 0 & 1 & 0  \end{array}\right).
\end{aligned}\end{equation}

Similarly any ordinary scalar differential operator of order $n$ can be rewritten as a first order $n\times n$ operator of
special type: all $x$-dependent coefficients occupy just the last column of $\tilde{q}(x,t)$ \cite{DriSok}.

\subsection{The relation between the matrix operators and Kac-Moody algebras}

The next important step was proposed by Drinfeld and Sokolov \cite{DriSok}. They proposed to apply to $\tilde{L}$ a
gauge transformation \cite{ZaMi}
so that the new operator $L$ acquires canonical form from the point of view of Kac-Moody algebras.

This we will do in two steps. The first step  takes $\tilde{L}_{(1)}$ to the operator $\tilde{L}_{(2)}$ with potential
\begin{equation}\label{eq:gtr}
\tilde{U}_{(2)}(x,t,\lambda) = g^{-1} \tilde{U}_{(1)}(x,t,\lambda) g + g^{-1} g_x,
\end{equation}
where $g(x,t)$ is a $3\times 3$ upper-triangular matrix of the form:
\begin{equation}\label{eq:g}
 g(x,t) = \left(\begin{array}{ccc} 1 & \bq_1 & c \\ 0 & 1 & \bq_2 \\ 0 & 0 & 1  \end{array}\right).
\end{equation}
The constraints that we impose on $\bq_1$, $\bq_2$ and $c$ require that
\begin{equation}\label{eq:U2}
 \tilde{U}_{(2)}(x,t,\lambda) = q_{(2)}(x,t) - \mathcal{J}_{(1)}(\lambda),
\end{equation}
where $q_{(2)}(x,t)$ is a diagonal matrix. This can be achieved imposing:
\begin{equation}\label{eq:g2}\begin{aligned}
c(x) &= \bq_1\bq_2 - \bq_1^2 - \frac{\partial \bq_1}{ \partial x }, \qquad 6Q(x) =
 -\frac{\partial \bq_1 }{ \partial x } -  \frac{\partial \bq_2}{ \partial x } -  \bq_1^2 - \bq_2^2 + \bq_1 \bq_2.,   \\
6R(x) &=\bq_1\bq_2 (\bq_1-\bq_2) +\frac{\partial \bq_1}{ \partial x } \bq_2 -2\bq_2 \frac{\partial \bq_2}{ \partial x }- \frac{\partial^2 \bq_2}{ \partial x^2 }
\end{aligned}\end{equation}
With this choice for $Q$ and $R$ we find $q_{(2)}(x,t) =\diag (\bq_1, \bq_2-\bq_1,-\bq_2)$. Note that $\tr (q(x,t))=0$,
i.e. it belongs to the Cartan subalgebra of the Lie algebra $sl(3)$.

The second step is to apply to $\tilde{L}_{(2)} $ a similarity transformation by the diagonal matrix
$C_2(\lambda) = \diag( \lambda^{-1}, 1 , \lambda)$. Thus we obtain the operator
\begin{equation}\label{eq:LL}\begin{aligned}
\tilde{L} &= C_2(\lambda) \tilde{L}_{(2)} C_2^{-1}(\lambda) \equiv \frac{\partial }{ \partial x } + \tilde{U}(x,t,\lambda), &\quad
\tilde{U}(x,t,\lambda) &= \tilde{ q}(x,t) - \lambda \tilde{ J}, \\
\tilde{ q}(x,t) &= \left(\begin{array}{ccc} \bq_1 & 0 & 0 \\ 0 & \bq_2-\bq_1 & 0 \\ 0 & 0 & -\bq_2 \end{array}\right), &\quad
\tilde{ J}&= \left(\begin{array}{ccc} 0 & 0 & 1 \\ 1 & 0 & 0 \\ 0 & 1 & 0 \end{array}\right).
\end{aligned}\end{equation}
Similar transformations can be applied also to the $M$-operator in the Lax pair.

\subsection{Relation to Kac-Moody algebras}

Let us now explain the relation of the above operator $\tilde{ L}$ (\ref{eq:LL}) to the Kac-Moody algebras \cite{Kac}.
Skipping the details we will outline the construction of the Kac-Moody algebras. Here we will assume that the
reader is familiar with the theory of simple Lie algebras \cite{Helg}. In fact it will be enough to know the Cartan-Weyl basis
of the algebra $sl(3)$, in Cartan classification this algebra is denoted as $A_2$.

The algebra $sl(3)$ has rank 2 and its root system contains three positive roots: $\Delta_+=\{ e_1-e_2, e_2-e_3, e_1-e_3\} $
and three negative roots  $\Delta_-=\{ -e_1+e_2,- e_2+e_3, -e_1+e_3\} $. The first two positive roots $\alpha_1 = e_1-e_2$ and
$\alpha_2= e_2-e_3$ form the set of simple roots of $sl(3)$. The third positive root is $\alpha_{\rm max} =e_1-e_3$ is the maximal one.
We will say that the simple roots are of height 1, the maximal root $\alpha_{\rm max} = \alpha_1+\alpha_2$ is of height 2. Similarly, the negative roots
$-\alpha_1$ and $-\alpha_2$ have height $-1$ and the minimal root $\alpha_{\rm min} =-\alpha_{\rm max}$ has height $-2 = 1 \mod(3)$.

The Cartan-Weyl basis of $sl(3)$ is formed by the Cartan subalgebra $\mathfrak{h}$ and by the Weyl generators $E_\alpha$ and $E_{-\alpha}$,
$\alpha\in \Delta_+$. In the typical $3\times 3$ representation these are given by:
\begin{equation}\label{eq:CWb}\begin{aligned}
 H_{\alpha_1} &=  \left(\begin{array}{ccc} 1 & 0 & 0 \\ 0 & -1 & 0 \\ 0 & 0 & 0   \end{array}\right), &\;
  H_{\alpha_2} &=  \left(\begin{array}{ccc} 0 & 0 & 0 \\ 0 & 1 & 0 \\ 0 & 0 & -1   \end{array}\right), &\;
E_{\alpha_1} &=  \left(\begin{array}{ccc} 0 & 1 & 0 \\ 0 & 0 & 0 \\ 0 & 0 & 0   \end{array}\right), \\
E_{\alpha_2} &=  \left(\begin{array}{ccc} 0 & 0 & 0 \\ 0 & 0 & 1 \\ 0 & 0 & 0   \end{array}\right), &\;
E_{\alpha_{\rm max}} &=  \left(\begin{array}{ccc} 0 & 0 & 1 \\ 0 & 0 & 0 \\ 0 & 0 & 0   \end{array}\right), &\;
E_{-\alpha} &= E_\alpha^T.
\end{aligned}\end{equation}
The main tool in constructing the Kac-Moody algebra based on $sl(3)$ is the grading, which according to \cite{Kac} must be
performed with the Coxeter automorphism. In our case the grading consists in splitting the algebra $sl(3)$ into the direct sum
of three linear subspaces as follows:
\begin{equation}\label{eq:grad}\begin{aligned}
sl(3) &\simeq  \mathfrak{g}^{(0)} \oplus \mathfrak{g}^{(1)} \oplus \mathfrak{g}^{(2)} , &\; \mathfrak{g}^{(0)} &\equiv \span \{H_{\alpha_1}, H_{\alpha_2} \},\\
 \mathfrak{g}^{(1)} &\equiv \span \{E_{\alpha_1}, E_{\alpha_2},
 E_{\alpha_{\rm min}} \}, &\; \mathfrak{g}^{(2)} &\equiv \span \{E_{-\alpha_1}, E_{-\alpha_2},  E_{\alpha_{\rm max}}\}.
\end{aligned}\end{equation}
In other words if we assume that the Cartan generators have height 0, then each of the subspaces $\mathfrak{g}^{(k)}$ consists
of elements of height $k$ modulo 3, which is the Coxeter number of $sl(3)$.  The important property of the subspaces $\mathfrak{g}^{(k)}$
is provided by the relation: if $X_k \in \mathfrak{g}^{(k \mod 3)}$ and $X_m \in \mathfrak{g}^{(m \mod 3)}$ then
\begin{equation}\label{eq:KMal}\begin{aligned}
{} [X_k, X_m ] \in  \mathfrak{g}^{(k+m \mod 3)}.
\end{aligned}\end{equation}

 Next the elements of the Kac-Moody algebra based on this grading of $sl(3)$ consists of finite or semi-infinite series of the form
 \begin{equation}\label{eq:KMel}\begin{aligned}
 X(\lambda) = \sum_{p\ll N}^{} X_p \lambda^p, \qquad X_p \in \mathfrak{g}^{(p \mod 3)}.
 \end{aligned}\end{equation}
The subspaces $\mathfrak{g}^{(p)}$ are in fact the eigensubspaces of the Coxeter automorphism $\tilde{ C}_0$ which in this case
is an element of the Cartan subgroup of the form
\begin{equation}\label{eq:Cox}\begin{aligned}
 \tilde{ C}_0= \exp \left( \frac{2\pi i}{3} H_\rho \right), \qquad \rho = e_1 -e_3; \qquad \mbox{i.e.} \qquad \tilde{ C}_0=\diag (\omega, 1, \omega^{-1}).
\end{aligned}\end{equation}
Indeed, it is easy to check that $\tilde{ C}_0 E_\alpha \tilde{ C}_0^{-1} = \omega^k E_\alpha$, where $\omega =e^{2\pi i/3}$ and
$k = \hgt (\alpha)$. Obviously $\tilde{ C}_0^3 =\openone$.

\begin{remark}\label{rem:2}
In fact we will use also an alternative grading, used also in \cite{DriSok} in which the subspaces $\mathfrak{g}^{(1)}$ and $\mathfrak{g}^{(2)}$ are
interchanged. It is generated by an equivalent realization of the Coxeter automorphism: $\tilde{ C}_0^{-1} E_\alpha \tilde{ C}_0 = \omega^{-k} E_\alpha$,
where $k=\hgt (\alpha)$.
\end{remark}
\begin{remark}\label{rem:1}
The gauge transformation described above is the analogue of the famous Miura transformation, which maps
the KdV equation into the modified KdV (MKdV) equation. Therefore the Lax pair (\ref{eq:LM}) will
produce not the KKE, but rather a system of MKdV eqs. that are gauge equivalent to KKE.

\end{remark}

In order to understand the interrelation between  the Kac-Moody algebras and the ordinary differential operators it remains to
note that the potential $\tilde{ U}(x,t,\lambda)$ in (\ref{eq:LL}) is an element of the Kac-Moody algebra $A_2^{(1)}$ with the the alternative grading,
see Remark~\ref{rem:2}.

\subsection{Factorized ordinary differential operators}

In fact the matrix operator $\tilde{ L}$ (\ref{eq:LL}) can be brought back to scalar form. Indeed, the matrix scattering problem $\tilde{ L}\chi(x,t,\lambda)=0$
can be written down as
\begin{equation}\label{eq:Lchi}\begin{aligned}
\frac{\partial \chi_1}{ \partial x } + q_1(x,t) \chi_1 (x,t,\lambda) &= \lambda \chi_3 (x,t,\lambda), \\
\frac{\partial \chi_2}{ \partial x } + (q_2(x,t) - q_1(x,t)) \chi_2 (x,t,\lambda) &= \lambda \chi_1 (x,t,\lambda), \\
\frac{\partial \chi_3}{ \partial x } - q_2(x,t) \chi_3 (x,t,\lambda) &= \lambda \chi_2 (x,t,\lambda),
\end{aligned}\end{equation}
which can easily be rewritten as the following scalar eigenvalue problem:
\begin{equation}\label{eq:Lfac}\begin{aligned}
\tilde{\mathcal{L}} \chi_3  &= \lambda^3 \chi_3 (x,t,\lambda), \\
\tilde{\mathcal{L}} &= (D_x +q_1(x,t)) (D_x +(q_1(x,t)-q_2(x,t)) (D_x -q_2(x,t)) ,
\end{aligned}\end{equation}
which is  a factorized third order differential operator $\tilde{\mathcal{L}}$.
Thus the gauge transformation (\ref{eq:gtr}) effectively takes the scalar operator $\mathcal{L}$ (\ref{eq:Ka1}) into the
factorized one $\tilde{\mathcal{L}}$ (\ref{eq:Lfac}).

\section{Lax representation of MKdV eqs}

\subsection{The generic two-component MKdV's equations}
Now we have the tools to construct the Lax pair of the MKdV equation which is gauge equivalent to KKE.
In this Section we will derive the Lax representation of the fifth order MKdV equations. This derivation, as well as the
direct and inverse scattering problems for these Lax operators are more conveniently executed if the term $\lambda \tilde{J}$
is taken in diagonal form. This is easily achieved by a similarity transformation with the constant matrix $w_0$:
\begin{equation}\label{eq:L23}\begin{aligned}
L = w_0^{-1} \tilde{L}w_0, \qquad w_0 = \frac{1}{\sqrt{3}}\left(\begin{array}{ccc} \omega & 1 & \omega^2 \\ 1 & 1 & 1 \\
\omega^2 & 1 & \omega \end{array}\right)
\end{aligned}\end{equation}

Thus we construct the Lax pair for the MKdV eqs. as follows:
\begin{equation}\label{eq:LM}\begin{aligned}
L \psi (x,t,\lambda) &\equiv i\frac{\partial \psi }{ \partial x } + \left( q(x,t) -\lambda J\right)\psi(x,t,\lambda)=0, \\
M \psi (x,t,\lambda) &\equiv i\frac{\partial \psi }{ \partial t } + \left( V(x,t,\lambda) -\lambda^5 K\right)\psi(x,t,\lambda)=0, \\
V(x,t,\lambda) &= \sum_{s=0}^{4} V_{s}(x,t) \lambda^s, \qquad J= \diag (\omega, 1,\omega^2),  \qquad K= \diag (\omega^2, 1,\omega),
\end{aligned}\end{equation}
where the basis in the algebra $sl(3)$ is given in the Appendix. Below  we use the notations:
\begin{equation}\label{eq:B10.2}\begin{aligned}
q(x,t) & = i\sqrt{3} (\bq_1 B_1^{(0)} + \bq_2 B_2^{(0)} ), &\qquad J &= \omega^2 B_3^{(1)}, \qquad K=  a \omega B_3^{(2)}, \\
q_1 &= \omega \bq_1 +\omega^{-1}\bq_2, &\qquad  q_2 &= -(\omega^{-1} \bq_1 +\omega \bq_2),
\end{aligned}\end{equation}
and $\omega= \exp (2\pi i /3)$.

\subsection{Solving the recurrent relations and $\Lambda$-operators}

The condition $[L,M]=0$ must hold true identically with respect to $\lambda$. This leads to a set of equations:
\begin{equation}\label{eq:rec1}\begin{aligned}
& \lambda^5 &\qquad [K,q] &= [J,V_4] , \\
& \lambda^k &\qquad  i\frac{\partial V_k}{ \partial x } + [q, V_k] &= [J,V_{k-1}] , \qquad k=1,\dots, 4; \\
& \lambda^0 &\qquad  i\frac{\partial V_0}{ \partial x } - i \frac{\partial q}{ \partial t } &=0.
\end{aligned}\end{equation}
The equations (\ref{eq:rec1}) can be viewed as recurrent relations which allow one to determine $V_k$ in terms of $q$
and its $x$-derivatives. This kind of problems have been thoroughly analyzed, see \cite{GYa*94,Yan1}, so below we just
briefly mention the effects of the $\mathbb{Z}_3$ reductions we have imposed.

Obviously we need to split each od the coefficients $V_k \in \mathfrak{g}^{(\underline{k})}$ into diagonal and offdiagonal parts:
\begin{equation}\label{eq:Vk2}\begin{aligned}
V_k = V_k ^{\rm f} + w_k B_3 ^{(\underline{k})}, \qquad \underline{k} = k \mod 3.
\end{aligned}\end{equation}
From the appendix it is clear that only $B_3^{(1)}$ and  $B_3^{(2)}$ are non-vanishing, while, for example,  $V_0$ and  $V_3$
do not have diagonal parts, so  $V_0 \equiv V_0^{\rm f}$ and $V_3 \equiv V_3^{\rm f}$.

The linear mapping $\ad_J \cdot \equiv [J, \cdot ]$ is also playing an important role. The diagonal and the off-diagonal parts of the
matrices in fact provide the kernel and the image of $\ad_J$. So in the space of off-diagonal matrices we can define also the
inverse $\ad_J^{-1}$. In addition $\ad_J$ and $\ad_J^{-1}$ map the linear spaces $\mathfrak{g}^{(\underline{k})}$ as follows:
\begin{equation}\label{eq:adJ}\begin{aligned}
\ad_J \colon   \mathfrak{g}^{(\underline{k})} \to  \mathfrak{g}^{(\underline{k+1})}, \qquad
\ad_J^{-1} \colon   \mathfrak{g}^{(\underline{k})} \to  \mathfrak{g}^{(\underline{k-1})}.
\end{aligned}\end{equation}

The formal solution of the recurrent relations (\ref{eq:rec1}) is most conveniently written down with the help of the
recursion operators:
\begin{equation}\label{eq:recop}\begin{aligned}
\Lambda_0 Y &= \ad_J^{-1} \left( i \frac{\partial Y}{ \partial x } + [q, Y]^{\rm f} \right), \qquad Y \in \mathfrak{g}^{(\underline{0})}, \\
\Lambda_k X_k^{\rm f} &= \ad_J^{-1} \left( i \frac{\partial X_k^{\rm f}}{ \partial x } + [q, X_k^{\rm f}]^{\rm f} + \frac{i}{3} [q, B_3^{(\underline{k})}]
\int_{ }^{x} dy\; \left\langle [q(y), X_k^{\rm f}], B_3^{(\underline{3-k})}\right\rangle \right),
\end{aligned}\end{equation}
where $X_k  \in \mathfrak{g}^{(\underline{k})} $. Thus the formal solution of the recurrent relations provides the
following answer for $V_k$:
\begin{equation}\label{eq:Vk-form}\begin{aligned}
V_4 &= \ad_J^{-1} [K,q], & \qquad V_3^{\rm f} &=  \Lambda_0 V_4, \\
V_2^{\rm f} & = \Lambda_2 V_3^{\rm f} + \frac{i}{3} w_2 B_3^{(2)}, &\quad
V_1^{\rm f} & = \Lambda_1 V_2^{\rm f} + \frac{i}{3} w_1 B_3^{(1)}, &\quad
V_0^{\rm f} &= \Lambda_0 V_1^{\rm f},
\end{aligned}\end{equation}
where
\begin{equation}\label{eq:wk}\begin{aligned}
 w_k = \int_{}^{x} dy \; \left\langle [q(y), V_k^{\rm f}], B_3^{(\underline{3-k})}\right\rangle .
\end{aligned}\end{equation}
At the end we get also the formal expression for the corresponding NLEE:
\begin{equation}\label{eq:nle}\begin{aligned}
i \frac{\partial q}{ \partial t }  + a \frac{\partial }{ \partial x } {\bf \Lambda} \Lambda_0 \ad_J^{-1} [K,q]=0, \qquad {\bf \Lambda} =
\Lambda_0 \Lambda_1 \Lambda_2.
\end{aligned}\end{equation}
Obviously, one can consider as a potential to the  $M$-operator polynomial $V_{(N)} = \sum_{p=0}^{N}V_p \lambda^p$
of any power $N$ as long as $N+1 \neq 0 \mod 3$. The corresponding NLEE will be generated by a relevant polynomial
of the recursion operators $\Lambda_k$.

\subsection{The explicit form of the $M$-operator}
Let us now do the calculations for the $V_k$ explicitly. Skipping the details we have:
\begin{equation}\label{eq:V12.3}\begin{aligned}
 V_4(x,t) &= -ia\sqrt{3}  (q_1 \omega^2  B_1^{(1)} +q_2    B_2^{(1)} ),
\end{aligned}\end{equation}
\begin{equation}\label{eq:V12.4}\begin{aligned}
 V_3(x,t) &= a  (v_{3,1}  B_1^{(0)}  - v_{3,2}  B_2^{(0)}, \qquad
 v_{3,1}  = \frac{\partial q_1}{ \partial x } +3 q_2^2, \qquad  v_{3,2}  = \frac{\partial q_2}{ \partial x } -3 q_1^2,
\end{aligned}\end{equation}
\begin{equation}\label{eq:V12.5}\begin{aligned}
 V_2(x,t) &= \frac{ia\sqrt{3}}{9}  (v_{2,1} \omega  B_1^{(2)} + v_{2,2} B_2^{(2)} + v_{2,3}   B_3^{(2)} ), \\
  v_{2,1} &= \frac{\partial^2 q_1 }{ \partial x^2 } +6 q_2 \frac{\partial q_2}{ \partial x }, \qquad
  v_{2,2} = \frac{\partial^2 q_2 }{ \partial x^2 } -6  q_1 \frac{\partial q_1}{ \partial x }, \\
  v_{2,3} &= 3 q_1 \frac{\partial q_2 }{ \partial x } - 3 q_2 \frac{\partial q_1}{ \partial x } + 6(q_1^3 +q_2^3).
\end{aligned}\end{equation}

\begin{equation}\label{eq:V12.6}\begin{aligned}
 V_1(x,t) &= \frac{a}{3} ( -v_{1,1} \omega^2  B_1^{(1)} + v_{1,2}  B_2^{(1)} +v_{1,3} \omega^2  B_3^{(1)}), \\
   v_{1,1} &=  \frac{\partial^3 q_1 }{ \partial x^3 } + 3 q_2 \frac{\partial^2 q_2}{ \partial x^2 } + 6 \left( \frac{\partial q_2}{ \partial x }\right)^2
  + 27 q_1q_2 \frac{\partial q_1}{ \partial x } -9 q_1^2 \frac{\partial q_2}{ \partial x } 18 q_1 (q_1^3 + q_2^3),   \\
    v_{1,2} &= 3 \frac{\partial^3 q_2 }{ \partial x^3 } - 3 q_1 \frac{\partial^2 q_1}{ \partial x^2 } - 6 \left( \frac{\partial q_1}{ \partial x }\right)^2
  + 27 q_1q_2 \frac{\partial q_2}{ \partial x } -9 q_2^2 \frac{\partial q_1}{ \partial x } -18 q_2 (q_1^3 + q_2^3),   \\
  v_{1,3}   &=  3 \left( q_2\frac{\partial^2 q_1 }{ \partial x^2 } +  q_1 \frac{\partial^2 q_2}{ \partial x^2 } -   \frac{\partial q_1}{ \partial x }\frac{\partial q_2}{ \partial x }
  -3  q_1^2 \frac{\partial q_1}{ \partial x } +3 q_2^2 \frac{\partial q_2}{ \partial x } + 9 q_1^2 q_2^2 \right).
\end{aligned}\end{equation}

\begin{equation}\label{eq:V12.7}\begin{aligned}
 V_0(x,t) &= -\frac{i a\sqrt{3}}{9 }  (v_{0,1} \omega  B_1^{(0)} +v_{0,2}  B_2^{(0)}), \\
  v_{0,1} &=  \frac{\partial^4 q_1 }{ \partial x^4 } + 15 \frac{\partial q_2}{ \partial x } \frac{\partial^2 q_2}{ \partial x^2 }
  + 45 q_2 \left( \left( \frac{\partial q_1}{ \partial x }\right)^2 + q_1 \frac{\partial ^2 q_1}{ \partial x^2 } \right)  \\
&\qquad\qquad  +  45 (q_1^3 +q_2^3) \frac{\partial q_1}{ \partial x } +27  q_2^2 (5q_1^3 + 2q_2^3), \\
  v_{0,2} &=  \frac{\partial^4 q_2 }{ \partial x^4 } - 15 \frac{\partial q_1}{ \partial x } \frac{\partial^2 q_1}{ \partial x^2 }
  + 45 q_1 \left( \left( \frac{\partial q_2}{ \partial x }\right)^2 + q_2 \frac{\partial ^2 q_2}{ \partial x^2 } \right) \\
  &\qquad \qquad -  45 (q_1^3 +q_2^3) \frac{\partial q_2}{ \partial x } + 27 q_1^2 (2q_1^3 + 5q_2^3),
\end{aligned}\end{equation}

The NLEE:
\begin{equation}\label{eq:Nle}\begin{aligned}
 \frac{\partial q_1}{ \partial t} + \frac{a}{9} \frac{\partial v_{01}}{ \partial x }=0, \qquad
  \frac{\partial q_2}{ \partial t} + \frac{a}{9} \frac{\partial v_{02}}{ \partial x }=0.
\end{aligned}\end{equation}

\subsection{Special reductions}

Kaup considers two special reductions on his La operator $\mathcal{L}$: A) $R=0$ and B) $R= \frac{1}{2}Q_x$.
It is the second reduction that is responsible for the KKE; in terms $q_1$ and $q_2$ it can be formulated as
\begin{equation}\label{eq:q1-2}\begin{aligned}
 q_2 =-q_1.
\end{aligned}\end{equation}
It may be realized using external automorphism of $sl(3)$, so it must be responsible for $A_2^{(2)}$ Kac-Moody algebra.

Imposing the reduction (\ref{eq:q1-2})  we get the equation:
\begin{equation}\label{eq:eq-q1}\begin{aligned}
\frac{\partial q_1}{ \partial t } &= -\frac{a}{9} \frac{\partial }{ \partial x } \left(
 \frac{\partial^4 q_1 }{ \partial x^4 } + 15 \frac{\partial q_1}{ \partial x } \frac{\partial^2 q_1}{ \partial x^2 }
  - 45 q_1 \left( \left( \frac{\partial q_1}{ \partial x }\right)^2 + q_1 \frac{\partial ^2 q_1}{ \partial x^2 } \right)  + 81q_1^5 \right),
\end{aligned}\end{equation}
which is gauge equivalent to the KKE.

\section{The FAS of the Lax operators with $\mathbb{Z}_3 $-reduction.}\label{sec:2.3}

The idea for the FAS for the generalized Zakharov-Shabat (GZS) system has been proposed by Shabat \cite{Sha}, see also \cite{ZMNP}.
However for the GZS $J$ is with real eigenvalues, while our Lax operator has complex eigenvalues.
The ideas of Shabat were generalized by Beals and Coifman \cite{CBC2} and Caudrey \cite{CBC1} for operators $L$
related to the  algebras $sl(n)$; these results were  extended to $L$ operators related to any simple Lie algebra $\mathfrak{g}$
\cite{GYa*94}, see also \cite{ConM,Yan1,SIAM,GeYa-13}.

The Jost solutions of eq. (\ref{eq:LM}) are defined by:
\begin{equation}\label{eq:La3}\begin{aligned}
\lim_{x \to -\infty}\phi_+(x ,\lambda)e^{i \lambda Jx } & = \openone, \qquad
\lim_{x \to \infty}\phi_-(x ,\lambda)e^{i \lambda Jx }  = \openone,
\end{aligned}\end{equation}
They satisfy the integral equations:
\begin{equation}\label{eq:Jo1}\begin{aligned}
Y_\pm (x ,\lambda) = \openone + \int_{\pm \infty}^{x } dy e^{-i\lambda J(x -y)}  Q(y) Y_\pm (y,\lambda)  e^{i\lambda J(x -y)} ,
\end{aligned}\end{equation}
where $Y_\pm (x ,\lambda) = \phi_\pm (x ,\lambda)e^{i\lambda Jx }$. Unfortunately, with our choice for $J=\diag (\omega,1,\omega^2)$
this integral equations have no solutions. The reason is that the factors $e^{i\lambda J(x -y)} $ in the kernel in (\ref{eq:Jo1})
can not be made to decrease simultaneously.

Following the ideas of Caudrey, Beals and Coifman, see \cite{CBC1,CBC2,GYa*94} we start with the Jost solutions for potentials on compact
support, i.e. assume that $q(x )=0$ for $x  < -L_0$ and $x  >L_0$. Then the integrals in (\ref{eq:Jo1}) converge and one can prove
the existence of $Y_\pm (x ,\lambda)$.

The continuous spectrum of $L$ consists of those points $\lambda$, for which $e^{i\lambda (J_k - J_j)(x-y)}$ oscillate, which means that
\begin{equation}\label{eq:CS1}
\im \lambda (J_k - J_j) = \im \lambda (\omega^{2-k} - \omega^{2-j}) =0.
\end{equation}
It is easy to check that for each pair of indices $k \neq j$ eq. (\ref{eq:CS1}) has a solution of the form $\arg \lambda =\const $ depending on
$k$ and $j$.
The solutions for all choices of the pairs $k,j$ fill up a pair of rays $l_\nu$ and $l_{\nu+3}$ which are given by:
\begin{eqnarray}\label{eq:l-nu}
l_\nu \colon \arg(\lambda) ={\pi (2\nu +1) \over 6 } , \qquad \Omega_\nu \colon
\frac{\pi (2\nu+1) }{6} \leq \arg \lambda \leq \frac{\pi (2\nu+3) }{6} ,
\end{eqnarray}
where $\nu =0,\dots , 5$, see Fig. \ref{fig:1}.

Thus the  analyticity regions of the FAS are the 6 sectors  $\Omega_\nu$, $\nu=0,\dots, 5$ split up by the set of rays
$l_\nu$, $\nu =0,\dots ,5$, see Fig. \ref{fig:1}. Now we will outline how one can construct a FAS in each of these sectors.

Obviously, if $\im \lambda \alpha(J)=0$ on the rays $l_\nu\cup l_{\nu+3}$, then $\im \lambda \alpha(J)>0$ for
$\lambda \in \Omega_\nu \cup \Omega_{\nu+1} \cup \Omega_{\nu+2}$ and $\im \lambda \alpha(J)<0$ for
$\lambda \in \Omega_{\nu-1} \cup \Omega_{\nu-2} \cup \Omega_{\nu-3}$; of course all indices here are understood modulo 6. As a result the
factors  $e^{-i\lambda J(x -y)} $ will decay exponentially if $\im \alpha(J) <0$ and $x  -y >0$ or if $\im \alpha(J) >0$ and $x  -y <0$.
In eq. (\ref{eq:10}) below we have listed the signs of $\im \alpha(J) $ for each of the sectors $\Omega_\nu$.

\begin{figure}
  \centering
  \includegraphics[width=6.5cm]{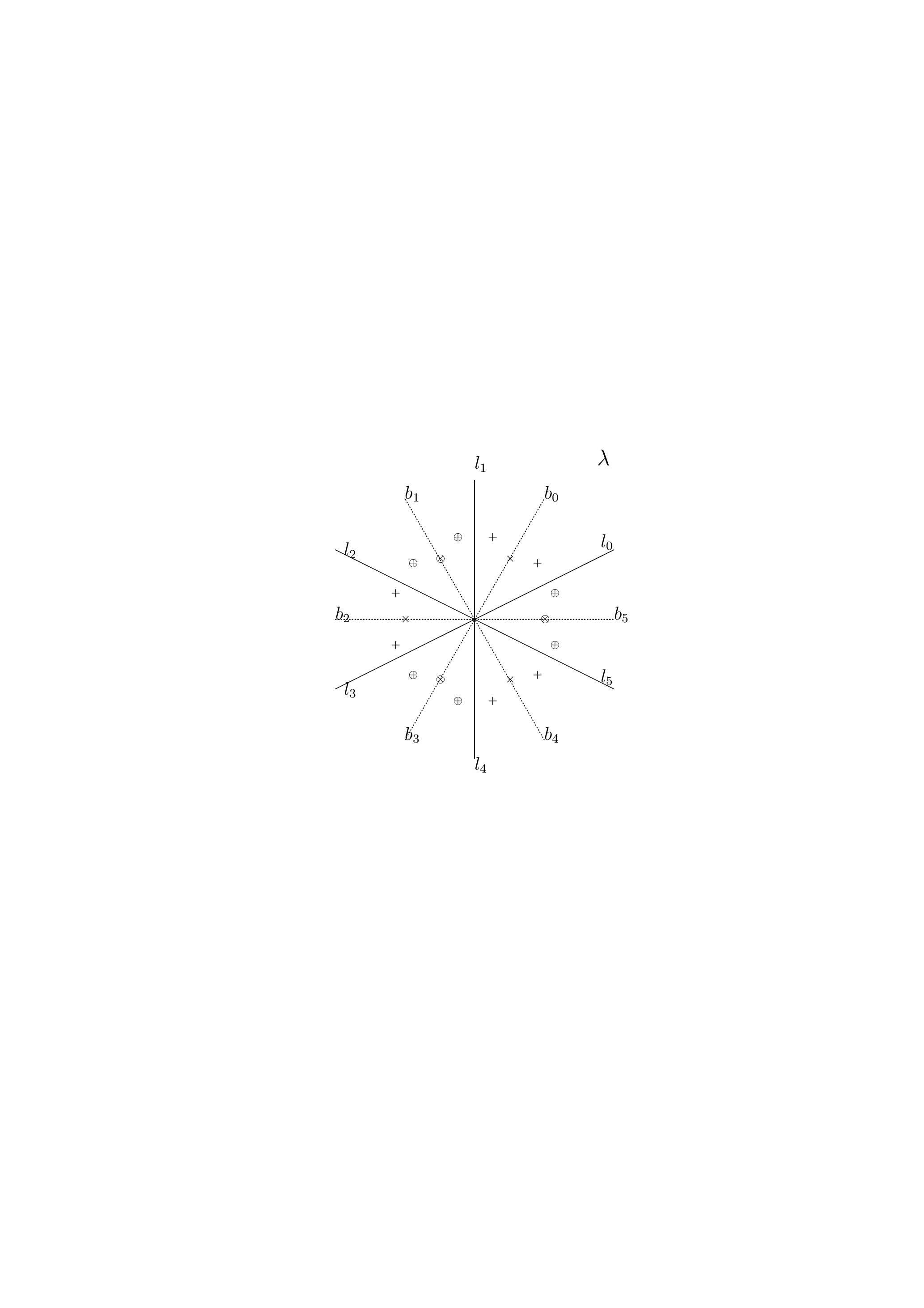}
  \caption{The contour of the RHP with $\bbbz_3$-symmetry fills up the rays $l_\nu$, $\nu=1,\dots, 6$.
By $\times$ and $\otimes$ (resp. by $+$ and $\oplus$)  we have denoted  the locations of the
discrete eigenvalues corresponding to a soliton of first type (resp. of second type).
\label{fig:1}}

\end{figure}

To each ray one can relate the root satisfying $\im \lambda \alpha(J)=0 $, i.e.
\begin{equation}\label{eq:10}\begin{aligned}
& l_0,  &\quad \pm &(e_1-e_2) &\quad \Omega_0 &\quad &\alpha_1<0, &\quad &\alpha_2 >0 &\quad &\alpha_3 >0 \\
& l_1,  &\quad \pm &(e_1-e_3) &\quad \Omega_1 &\quad &\alpha_1 >0, &\quad &\alpha_2 >0 &\quad &\alpha_3 <0 \\
& l_2,  &\quad \pm &(e_2-e_3) &\quad \Omega_2 &\quad &\alpha_1 <0, &\quad &\alpha_2 <0 &\quad &\alpha_3 <0 .
\end{aligned}\end{equation}
There are two fundamental regions: $\Omega_0$ and $\Omega_1$.
The FAS in the other sectors can be obtained from the FAS in $\Omega_0$ and $\Omega_1$  by acting with the automorphism $C_0$:
\begin{equation}\label{eq:11}\begin{aligned}
C_0\Omega_\nu &\equiv \Omega_{\nu+2}, &\qquad C_0 l_\nu &\equiv l_{\nu+2}, &\quad   \nu =0,1, \dots , 5.
\end{aligned}\end{equation}

The next step is to construct the set of integral equations for FAS
which will be analytic in $\Omega _\nu  $. They are different from the integral
equations for the Jost solutions  (\ref{eq:Jo1}) because for each choice of the
matrix element $(k,j)$ we specify the lower limit of the integral so that all
exponential factors $e^{i\lambda (J_k-J_j)(x  -y)}$ decrease for $x ,y\to \pm \infty$,
\begin{equation}\label{eq:12.1nu}\begin{aligned}
X^\nu _{kj}(x ,\lambda ) &= \delta _{kj} + i \int_{\epsilon_{kj}\infty }^{x} dy
e^{-i\lambda (J_k-J_j)(x  -y)} \sum_{p=1}^{h} q_{kp}(y) X^\nu _{pj}(y,\lambda ) ,
\end{aligned}\end{equation}
where the signs $\epsilon_{kj}$ for each of the sectors $\Omega_\nu$ are collected
in the table \ref{tab:1}, see also \cite{Tih,ConM,GeYa-13}. We also assume that for $k=j$ $\epsilon_{kk}=-1$.

\begin{table}
\centering
\begin{tabular}{|c|c|c|c|c|c|c|}
  \hline
  $(k,j)$ & (1,2)  & (1,3) & (2,3) & (2,1) & (3,2) & (3,1) \\ \hline
  $\Omega_0$ & $-$ & $+$ & $+$ & $+$ & $-$ & $-$ \\
  $\Omega_1$ &  $-$ & $+$ & $-$ & $+$ & $-$ & $+$ \\
  $\Omega_2$ & $-$ & $+$ & $-$ & $+$ & $-$ & $+$ \\
  $\Omega_3$ & $+$ & $+$ & $+$ & $-$ & $-$ & $-$ \\
  $\Omega_4$ & $-$ & $+$ & $-$ & $+$ & $-$ & $+$ \\
  $\Omega_5$ & $-$ & $+$ & $-$ & $+$ & $+$ & $-$ \\
  \hline
\end{tabular}
\caption{ The set of signs $\epsilon_{kj}$ for each of the sectors $\Omega_\nu$. \label{tab:1}}
\end{table}

The solution of the integral equations (\ref{eq:12.1nu})
will be the FAS of  $L$ in the sector $\Omega _\nu  $. The asymptotics
of $X^\nu (x,\lambda ) $ and $X^{\nu -1} (x,\lambda )  $ along the
ray $l_\nu  $ can be written in the form \cite{GYa*94,GeYa-13}:
\begin{equation}\label{eq:xi-as}\begin{aligned}
\lim_{x\to -\infty } e^{i\lambda J x} X^\nu (x,\lambda e^{i0} ) e^{-i\lambda J x} &= S_\nu ^+ (\lambda ), &\quad \lambda \in l_\nu  ,\\
\lim_{x\to \infty } e^{i\lambda J x} X^\nu (x,\lambda e^{i0} ) e^{-i\lambda J x} &= T_\nu ^- (\lambda )D_\nu ^+(\lambda ), &\quad \lambda \in l_\nu  ,\\
\lim_{x\to -\infty } e^{i\lambda J x} X^{\nu -1} (x,\lambda e^{-i0} )  e^{-i\lambda J x} &= S_\nu ^- (\lambda ), &\quad \lambda \in l_\nu  , \\
\lim_{x\to \infty } e^{i\lambda J x} X^{\nu -1} (x,\lambda e^{-i0} )  e^{-i\lambda J x} &= T_\nu ^+ (\lambda )D_\nu ^-(\lambda ), &\quad \lambda \in l_\nu  ,
\end{aligned}\end{equation}
where the matrices $S_\nu ^\pm $ and $T_\nu ^\pm $ belong to $su(2)$ subgroups of $sl(3)$. More specifically
from the integral equations (\ref{eq:12.1nu}) we find:
\begin{equation}\label{eq:SpmTpm0}\begin{aligned}
S_0^+ (\lambda) &= \openone + s^+_{0;21}E_{21}, &\; T_0^- (\lambda) &= \openone + \tau^-_{0;12} E_{12}, \\
S_0^- (\lambda) &= \openone +  s^+_{0;12}E_{12}, &\; T_0^+ (\lambda) &= \openone +   \tau^+_{0;21}E_{21}, \\
D_0^+ (\lambda) &=  d^+_{0;1} E_{11} + \frac{1}{d^+_{0;1}}E_{22}+E_{33}, &\; D_0^- (\lambda) &= \frac{1}{d^-_{0;1}} E_{11} +d^-_{0;1} E_{22}+E_{33}.
\end{aligned}\end{equation}
and
\begin{equation}\label{eq:SpmTpm1}\begin{aligned}
S_1^+ (\lambda) &= \openone +   s^+_{1;31} E_{31}, &\; T_1^- (\lambda) &= \openone + \tau^-_{1;13} E_{13}, \\
S_1^- (\lambda) &= \openone +  s^+_{1;13}E_{13}, &\; T_1^+ (\lambda) &= \openone + \tau^+_{1;31} E_{31}, \\
D_1^+ (\lambda) &=  d^+_{1;1} E_{11} +E_{22}+ \frac{1}{d^+_{1;1}} E_{33}, &\; D_1^- (\lambda) &= \frac{1}{d^-_{1;1}} E_{11}+ E_{22} +d^-_{1;1} E_{33}.
\end{aligned}\end{equation}
By $E_{kj}$ we mean a $3\times 3$ matrix with matrix elements $(E_{kj})_{mn}=\delta_{um} \delta_{jn}$.

The corresponding factors for the asymptotics of $X^\nu (x,\lambda e^{i0} )$ for $\nu>1$ are obtained from eqs. (\ref{eq:SpmTpm0}),
 (\ref{eq:SpmTpm1}) by applying the automorphism $C_0$. If we consider potential on finite support, then we can define not only
 the Jost solutions $\Psi_\pm (x,\lambda)$ but also the scattering matrix $T(\lambda) = \phi_-(x,\lambda)\phi_+^{-1}(x,\lambda)$.
The  factors $S_\nu^\pm (\lambda)$, $T_\nu^\pm (\lambda)$ and $D_\nu^\pm (\lambda)$ provide an analog of the Gauss decomposition
of the scattering matrix with respect to the $\nu  $-ordering, i.e.:
\begin{equation}\label{eq:nu-gauss}
T_\nu (\lambda ) = T_\nu ^-(\lambda ) D_\nu ^+(\lambda ) \hat{S}_\nu ^+
(\lambda ) =  T_\nu ^+(\lambda ) D_\nu ^-(\lambda ) \hat{S}_\nu ^-
(\lambda ) , \qquad  \lambda \in l_\nu .
\end{equation}

The $\mathbb{Z}_n $-symmetry imposes the following constraints on the FAS and
on the scattering matrix and its factors:
\begin{equation}\label{eq:Z_n-cons}\begin{aligned}
 C_0 X^\nu (x,\lambda \omega ) C_0^{-1} &= X^{\nu -2} (x,\lambda ) ,
&\qquad C_0 T_\nu (\lambda \omega ) C_0^{-1} &= T_{\nu -2}(\lambda ), \\
C_0 S^\pm_\nu (\lambda \omega ) C_0^{-1} &= S^\pm_{\nu -2}(\lambda ),
&\qquad C_0 D^\pm_\nu (\lambda \omega ) C_0^{-1} &= D^\pm_{\nu -2}(\lambda) ,
\end{aligned}\end{equation}
where the index $\nu -2 $ should be taken modulo $6 $.
Consequently we can view as independent only the data on two of the
rays, e.g. on $l_0  $ and $l_{1}$; all the rest will be
recovered using the reduction conditions.

If in addition we impose the $\mathbb{Z}_2 $-symmetry, then we will have also:
\begin{equation}\label{eq:Z_2-cons}\begin{aligned}
& \mbox{a)} &\; K_0^{-1}(X^\nu (x,-\lambda ^*))^\dag K_0 &= \hat{X}^{N+1-\nu}(x,\lambda), &\;
K_0^{-1}(S_\nu ^\pm (-\lambda ^*))K_0 &= \hat{S}_{N+1-\nu}^{\mp}(\lambda ),\\[3pt]
& \mbox{b)} &\; K_0^{-1}(X^\nu (x,\lambda ^*))^* K_0 &= \hat{X}^{\nu}(x,\lambda),
 &\; K_0^{-1}(S_\nu ^\pm (\lambda ^*))K_0 &= \hat{S}_{N+1-\nu}^{\mp}(\lambda ),
\end{aligned}\end{equation}
where $K_0 = E_{1,3} + E_{2,2} + E_{3,1}$ and by `hat' we denote the inverse matrix.
Analogous relations hold true for $T_\nu ^\pm(\lambda )$ and $D_\nu ^\pm(\lambda ) $.
One can prove also that $D_\nu ^+(\lambda ) $ (resp. $D_\nu
^-(\lambda ) $) allows analytic extension for $\lambda \in \Omega
_\nu  $ (resp. for $\lambda \in \Omega _{\nu -1} $. Another important fact is
that  $D_\nu ^+(\lambda ) = D_{\nu +1}^-(\lambda ) $ for all
$\lambda \in \Omega _\nu  $.

The next important step is the possibility to reduce the solution of the
ISP for the GZSs to a (local) RHP. More precisely, we have:
\begin{equation}\label{eq:*-nu}\begin{aligned}
X^\nu (x,t ,\lambda ) &= X^{\nu -1}(x,t ,\lambda ) G_\nu (x,t ,\lambda ), &\; \lambda &\in l_\nu ,\\
G_\nu (x,t ,\lambda ) &= e^{i (\lambda Jx +\lambda^5 Kt)} G_{0,\nu }(\lambda ) e^{-i (\lambda Jx +\lambda^5 Kt)},  &\; G_{0,\nu
}(\lambda ) &= \left. \hat{S}_\nu ^- S_\nu ^+(\lambda )\right|_{t=0} .
\end{aligned}\end{equation}
The collection of all these relations for $\nu =0,1,\dots,5 $ together with
\begin{equation}\label{eq:*-nu-norm}
\lim_{\lambda \to\infty } X^\nu (x,t ,\lambda ) = \openone ,
\end{equation}
can be viewed as a local RHP posed on the collection of rays
$\Sigma \equiv \{l_\nu \}_{\nu =1}^{2N} $ with canonical normalization.
Rather straightforwardly we can  prove that if $X^\nu (x,\lambda ) $ is a solution of the
RHP  then $\chi ^\nu (x,\lambda)=X^\nu (x,\lambda ) e^{-i\lambda J x } $ is a FAS of $L$ with potential
\begin{equation}\label{eq:q-CBC}
q(x ,t) = \lim_{\lambda \to\infty } \lambda \left( J - X^\nu (x ,t ,\lambda ) J \hat{X}^\nu (x ,t ,\lambda ) \right).
\end{equation}

\section{The dressing method and the $N$-soliton solutions}
The main idea of the dressing method \cite{ZaSha,Mik,Mik} is, starting from a known regular solution of the RHP $X_0^\nu(x,t,\lambda)$ to
construct a new singular solution $X_1^\nu(x,t,\lambda)$ of the same RHP. The two solutions are related by a dressing factor
$u(x,t,\lambda)$
\begin{equation}\label{eq:u}\begin{aligned}
X_1^\nu(x,t,\lambda) = u(x,t,\lambda) X_0^\nu(x,t,\lambda),
\end{aligned}\end{equation}
which may have pole singularities in $\lambda$. A typical anzats for $u(x,t,\lambda)$ is given by \cite{Mik}, see also \cite{115a,121a,GMSV}:
\begin{equation}\label{eq:6NuNp''m}\begin{aligned}
 u(x  ,t  ,\lambda ) &=\openone + \sum_{s=0}^{2} \left(
 \sum_{l=1}^{N_1} \frac{ C^{-s}A_{l} C^s }{\lambda -\lambda_l \omega^s}+\sum_{r=N_1+1}^{N}
 \frac{ C^{-s} A_{r} C^s}{\lambda -\lambda_r \omega^s}+\sum_{r=N_1+1}^{N} \frac{C^{-s} A^*_{r}C^s}{\lambda -(\lambda_r^*) \omega^s} \right)
\end{aligned}\end{equation}
with  $3N_1 +6N_2$ poles
and $\lambda_p$ is real if $p\in\overline{1,N_1}$ and complex if $p\in\overline{N_1+1,N_1+N_2}$.
In \cite{115a} both types of simplest one-soliton solutions for the Tzitzeica equation are derived. Note that
Tzitzeica equation possesses Lax representation with the same Lax operator, but its $M$-operator is linear with respect to $\lambda^{-1}$.

The dressing factor $u(x,t,\lambda)$ satisfies the equation:
\begin{equation}\label{eq:ux}\begin{aligned}
i \frac{\partial u}{ \partial x }  + (q^{(1)}(x,t) -\lambda J) u(x,t,\lambda) - u(x,t,\lambda)  (q^{(0)}(x,t) -\lambda J) =0,
\end{aligned}\end{equation}
where $q^{(0)}(x,t)$ (typically chosen to be vanishing) corresponds to the `naked' Lax operator (or to the regular solution of RHP),
while $q^{(1)}(x,t)$ is the potential of the `dressed' Lax operator. The equation (\ref{eq:ux}) must hold true identically
with respect to $\lambda$. We also assume that the residues $A_k(x ,t )$ are degenerate matrices of the form:
\begin{equation} \begin{aligned}
 A_k(x ,t ) = |n_k(x ,t )\rangle \langle m_k^T(x ,t )|, \qquad
(A_k)_{ij}(x ,t ) = n_{k;i}(x ,t )m_{k;j}(x ,t ).
\end{aligned}\end{equation}
Thus $u(x ,t ,\lambda)$ for $N_1=N_2=1$ has $9$  poles located at $\lambda_1 \omega^k$ with $\lambda_1$ real and $\lambda_2 \omega^k$,
$\lambda^*_2 \omega^k$, with $k=0,1,2$ and $\lambda_2$ complex.

Evaluating the residue of eq. (\ref{eq:ux}) for $\lambda =\lambda_k$ one finds that the `polarization' vectors $|n_k(x ,t )\rangle$ and
$ \langle m_k^T(x ,t )|$ must satisfy the equations:
\begin{equation}\label{eq:n-km-k}\begin{aligned}
 i \frac{\partial |n_k\rangle}{ \partial x }  + (\tilde{q}^{(1)}(x,t) - \lambda_k \tilde{ J})  |n_k(x ,t )\rangle =0, \qquad
i \frac{\partial  \langle m_k^T|}{ \partial x } + \lambda_k   \langle m_k^T(x ,t )| \tilde{J} =0.
\end{aligned}\end{equation}
where we have put $q^{(0)}(x,t)=0 $. Then the vectors  $\langle m_k^T(x ,t )|$ must depend on $x$ and $t$ as folows:
\begin{equation}\label{eq:m2}\begin{aligned}
m_1&=\mu_2  e^{-2\mathcal{X}_1}+2 |\mu_1| e^{\mathcal{X}_1 }\cos \left( \Omega_1 -\frac{2\pi}{3} \right),  &\;
m_2&= \mu_2  e^{-2\mathcal{X}_1}+2 |\mu_1| e^{\mathcal{X}_1 }\cos \left( \Omega_1  \right)   \\
m_3&=\mu_2  e^{-2\mathcal{X}_1}+2 |\mu_1| e^{\mathcal{X}_1 }\cos \left( \Omega_1 +\frac{2\pi}{3} \right), &\;
 \mathcal{X}_1&=\frac{1}{2}(\lambda_1 x + \lambda_1^5 t), \\
 \Omega_1 &=\frac{\sqrt{3}}{2}(\lambda_1 x - \lambda_1^5 t)-\alpha_1 , &\;   \mu_1&=|\mu_1|e^{i\alpha_1}.
\end{aligned}\end{equation}

The factors $u(x,t, \lambda)$ must also satisfy all symmetry conditions characteristic for the FAS. The $\mathbb{Z}_3$ symmetry
is already taken into account with the anzatz (\ref{eq:6NuNp''m}).
From  the second $\bbbz_2$-reduction (\ref{eq:Z_2-cons}), $K_0^{-1} u^\dag (x , t , -\lambda^* )K_0 = u^{-1} (x ,t , \lambda )$,
after taking the limit $\lambda \to\lambda_k$, we obtain  algebraic equation for $|n_k\rangle$ in terms of $\langle m_k^T|$:
Below we list the relevant formulae just  for the two types of one-soliton solutions (for the general case see \cite{Mik,115a}):
\begin{equation}\begin{aligned}\label{eq:4.3}
\mbox{a)} \qquad  |n_1\rangle = A^{-1}|m_1\rangle , \qquad  \mbox{b)} \qquad \left(\begin{array}{c} |n_2\rangle \\ |n_2^*\rangle
 \end{array}\right)  = \left(\begin{array}{cc} D & F \\ F^* & D^*   \end{array}\right)^{-1} \left(\begin{array}{c} |m_2\rangle \\ |m_2^*\rangle
 \end{array}\right) ,
\end{aligned}\end{equation}
where case a)  corresponds to the choice $N_1=1$, $N_2=0$  while case b) is relevant for  $N_1=0$, $N_2=1$.
The notation above are as follows:
\begin{equation}\label{eq:6N2s3mm}\begin{aligned}
A  &=\frac {1}{2\lambda_1^3 } \diag (Q^{(1)} , Q^{(2)} , Q^{(3)} ), &\quad
D  &=\frac {1}{2\lambda_2^3 } \diag (P^{(1)}, P^{(2)}, P^{(3)}), \\
F  &=\frac {1}{\lambda_2^3 + \lambda_2^{*,3}} \diag (K^{(1)}, K^{(2)}, K^{(3)}), &\;
Q^{(j)}  & = \langle m_1^T| \Lambda_{11}^{(j)}(\lambda_l, \lambda_1) |m_1 \rangle , \\
K^{(j)}  & = \langle m_2^{*,T}| \Lambda_{12}^{(j)}(\lambda_1, \lambda^*_2) |m_1 \rangle , &\;
P^{(j)}  & = \langle m_2^T| \Lambda_{21}^{(j)}(\lambda_2, \lambda_1) |m_l \rangle,
\end{aligned}\end{equation}
with
\begin{equation}\label{eq:QKP}\begin{aligned}
\Lambda^{(j)}_{lp} & =-\lambda_l \lambda_p E_{1+j,3-j}+\lambda_l^2 E_{2+j,2-j}+\lambda_p^2 E_{3+j, 1-j},  \quad j=1,2,3.
\end{aligned}\end{equation}

Skipping  the details, we just mention that this approach allows one to obtain the explicit form of the $N$-soliton solutions. We just mention
that along with the explicit expressions for the vectors $|n_k\rangle$ in terms of $\langle m_j| $ that follow from eqs. (\ref{eq:4.3})--(\ref{eq:QKP})
and  take into account that $| m_j\rangle $ are solutions of the `naked' Lax operator with vanishing potential $q^{(0)} =0$.

We end this section with a few comments about the simplest one-soliton solutions of the MKdV and KKE equations.
The first one is  these one-soliton solutions with generic choice of the polarization vectors {\em are not} traveling waves.

Skipping the details (see e.g. \cite{115a}) we find for the naked solution of the Lax operator
\begin{equation}\label{eq:h2}\begin{aligned}
\bq_1(x)=-\partial_x \ln \left( \frac{n_3m_3}{\lambda_1}-1\right)= -\partial_x \ln \left( \frac{2m_1 m_3}{m_2^2}-1\right)
\end{aligned}\end{equation}

Note that the soliton solution of the  KKE can be obtained from (\ref{eq:h2}) by:
\begin{equation}\label{eq:6Q}\begin{aligned}
 6Q=-2\bq_{1,x}-\bq_1^2.
\end{aligned}\end{equation}
 In the special case $\mu_2 =0$ we have
\begin{equation}\label{eq:link2}\begin{aligned}
 \bq_1(x,t)=-\partial_x \ln \left( \frac{1}{2}+ \frac{3}{2}\tan ^2 \Omega_1\right)= -\frac{3\sqrt{3} \lambda_1 \tan \Omega_1 }{2\sin ^2 \Omega_1 + 1 }
\end{aligned}\end{equation}

This solution is obviously singular. In addition the relevant potential of the Lax operator
\begin{equation}\label{eq:Q1}\begin{aligned}
6Q=\frac{9\lambda_1^2 (1-4\sin^2 \Omega_1)}{(1+2\sin^2 \Omega_1)^2}
\end{aligned}\end{equation}
is not in the functional class, since it does not decays  to zero.

A possible way to find regular soliton solutions of these equations is to take into account the
fact that both MKdV and KKE are invariant under the transformation
\[ x \to i x', \qquad t\to i t' ,  \qquad Q \to -Q'. \]
Then $ \Omega_1 \to i \Omega_1'$ and we get:
\begin{equation}\label{eq:link3}\begin{aligned}
\bq_1(x,t)= \frac{3\sqrt{3} \lambda_1 \tanh \Omega_1' }{2\sinh ^2 \Omega_1' -1  }, \qquad 6Q'=-\frac{9\lambda_1^2 (1+4\sinh^2 \Omega_1')}{(1-2\sinh^2 \Omega_1')^2}
\end{aligned}\end{equation}
which this time is singular only at two points $\sinh  \Omega_1'=\pm \frac{1}{\sqrt{2}}$.
 Moreover the potential $Q'(x,t)$  decays for  $x \to \pm \infty$

\section{The  Resolvent of the Lax operator}

The FAS can be used to construct the kernel of the resolvent of the Lax operator $L$.
In this section by $\chi^\nu (x ,\lambda)$ we will denote:
\begin{equation}\label{eq:1}\begin{aligned}
\chi^\nu (x ,\lambda) = u(x ,\lambda) \chi_0^\nu (x ,\lambda),
\end{aligned}\end{equation}
where $\chi_0^\nu (x ,\lambda)$ is a regular FAS and $u(x ,\lambda)$ is a dressing factor of
general form (\ref{eq:6NuNp''m}).

\begin{remark}\label{rem:3}
The dressing factor $u(x ,\lambda)$ has $3N_1+6N_2$ simple poles located at
$\lambda_l \omega^p$, $\lambda_r \omega^p$ and $\lambda_r^* \omega^p$ where $l=1,\dots, N_1$, $r=1,\dots , N_2$ and $p=0,1,2$.
Its inverse $u^{-1}(x ,\lambda)$ has also $3N_1+6N_2$ poles located $-\lambda_l \omega^p$, $-\lambda_r \omega^p$ and $-\lambda_r^* \omega^p$.
In what follows for brevity we will denote them by $\lambda_j $, $-\lambda_j$ for $j=1,\dots , 3N_1+6N_2$.
\end{remark}

Let us introduce
\begin{equation}\label{eq:R-nu}\begin{aligned}
R^\nu (x , x ',\lambda) &= \frac{1}{i} \chi^\nu (x ,\lambda) \Theta_\nu (x  -x ') \hat{\chi}^\nu (x ',\lambda),
\end{aligned}\end{equation}
\begin{equation}\label{eq:Theta}\begin{aligned}
\Theta_\nu (x  - x ') = \diag \left( t _\nu^{(1)} \theta( t _\nu^{(1)} (x  -x ')),
t _\nu^{(2)} \theta( t _\nu^{(2)} (x  -x ')), t _\nu^{(3)} \theta( t _\nu^{(3)} (x  -x ')) \right),
\end{aligned}\end{equation}
where  $\theta(x  -x ')$ is the step-function and $t _\nu^{(k)}=\pm 1$, see the table \ref{tab:2}.
\begin{table}
\centering
\begin{tabular}{|c|c|c|c|c|c|c|}
  \hline
  $ $ & $\Upsilon_0$  &  $\Upsilon_1$ &  $\Upsilon_2$ &  $\Upsilon_3$ &  $\Upsilon_4$ &  $\Upsilon_5$ \\ \hline
  $t _{\nu}^{(1)} $ & $-$ & $-$ & $-$ & $+$ & $+$ & $+$ \\
  $t _{\nu}^{(2)}$ &  $+$ & $+$ & $-$ & $-$ & $-$ & $+$ \\
  $t _{\nu}^{(3)}$ & $-$ & $+$ & $+$ & $+$ & $-$ & $-$ \\
  \hline
\end{tabular}
\caption{The set of signs $t _{\nu}^{(k)}$ for each of the sectors $\Upsilon_\nu$ (\ref{eq:Upsi}). \label{tab:2}}
\end{table}

\begin{theorem}\label{thm:1}
Let $Q(x ) $ be a Schwartz-type function  and let $ \lambda _j^\pm $
be the simple zeroes of the dressing factor $u(x , \lambda ) $ (\ref{eq:6NuNp''m}). Then

\begin{enumerate}

\item The functions $R^\nu (x , x ',\lambda)$ are analytic for $\lambda\in\Upsilon_\nu$ where
\begin{equation}\label{eq:Upsi}\begin{aligned}
b_\nu \colon \arg \lambda = \frac{\pi (\nu+1)}{3} , \qquad \Upsilon_\nu \colon
\frac{\pi (\nu+1)}{3} \leq \arg \lambda \leq \frac{\pi (\nu+2)}{3} .
\end{aligned}\end{equation}
 having pole singularities at $\pm \lambda _j^\pm $;

\item $R^\nu (x , x ',\lambda ) $ is a kernel of a bounded integral operator
for $\lambda \in \Upsilon_\nu $;

\item $R^\nu (x , x ',\lambda ) $ is uniformly bounded function for $\lambda
\in b_\nu $ and provides a kernel of an unbounded integral operator;

\item $R^\nu (x , x ',\lambda ) $ satisfy the equation:
\begin{equation}\label{eq:R3.1}
L(\lambda ) R^\nu (x , x ',\lambda )=\openone \delta (x - x ').
\end{equation}
\end{enumerate}
\end{theorem}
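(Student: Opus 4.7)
\medskip

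\noindent\textbf{Proof plan.} The four statements follow a single thread: track how the analytic, asymptotic and algebraic properties of the dressed FAS $\chi^\nu(x,\lambda)=u(x,\lambda)\chi_0^\nu(x,\lambda)$ propagate through the definition \eqref{eq:R-nu}. The diagonal step-function matrix $\Theta_\nu(x-x')$ is tailored so that, in each matrix element of $\chi^\nu(x,\lambda)\Theta_\nu(x-x')\hat\chi^\nu(x',\lambda)$, the exponential factors $e^{-i\lambda J_k(x-x')}$ and $e^{i\lambda J_j(x-x')}$ appearing through the Jost-type asymptotics of $\chi_0^\nu$ are precisely those whose real part is nonpositive for $\lambda$ in the interior of $\Upsilon_\nu$. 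The signs $t_\nu^{(k)}$ of Table~\ref{tab:2} are nothing but the consistent choice dictated by Table~\ref{tab:1} when one passes from the sector $\Omega_\nu$ (analyticity of $\chi_0^\nu$) to the overlapping sector $\Upsilon_\nu$ (boundedness of the kernel).

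For item \emph{(1)}, I would argue in two stages. First, $\chi_0^\nu(x,\lambda)$ is analytic for $\lambda\in\Omega_\nu$ by the construction recalled in Section~4. Second, the dressing factor $u(x,\lambda)$ in \eqref{eq:6NuNp''m} and its inverse are meromorphic on all of $\mathbb{C}$ with simple poles at the sets $\{\lambda_j^+\}$ and $\{-\lambda_j^-\}$ respectively; multiplying does not destroy analyticity, only adds these poles. The sector $\Upsilon_\nu$ is the connected component of the $\lambda$-plane inside which every entry of $\chi^\nu(x,\lambda)\Theta_\nu\hat\chi^\nu(x',\lambda)$ has the correct sign of the exponent, so $R^\nu$ extends analytically across the internal boundaries of the $\Omega_\mu$'s that lie inside $\Upsilon_\nu$. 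This extension, combined with the residues of $u$ and $u^{-1}$, pins down the poles of $R^\nu$ to be exactly $\pm\lambda_j^\pm$.

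For items \emph{(2)} and \emph{(3)} I would estimate $|R^\nu(x,x',\lambda)|$ entrywise. Writing $\chi^\nu=Y^\nu e^{-i\lambda Jx}$ with $Y^\nu$ bounded in $x$ (by the integral equations \eqref{eq:12.1nu} applied to the dressed potential), the $(k,j)$-entry of $R^\nu$ is, up to bounded factors, $t_\nu^{(k)}\theta(t_\nu^{(k)}(x-x'))\,e^{-i\lambda J_k(x-x')+i\lambda J_j(x-x')}$ summed over an intermediate index. The content of Table~\ref{tab:2} is that for $\lambda$ in the open sector $\Upsilon_\nu$ the exponent has strictly negative real part on the support of the corresponding $\theta$, yielding $L^2$-integrability in $(x,x')$ and hence boundedness of the integral operator. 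On the boundary ray $b_\nu$ the relevant exponentials become purely oscillatory, so the kernel stays uniformly bounded but is no longer square-integrable, giving only an unbounded operator.

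For item \emph{(4)} I would apply $L(\lambda)=i\partial_x+q(x)-\lambda J$ to \eqref{eq:R-nu}. Since $L(\lambda)\chi^\nu(x,\lambda)=0$, Leibniz gives
\begin{equation*}
L(\lambda)R^\nu(x,x',\lambda)=\chi^\nu(x,\lambda)\,\partial_x\Theta_\nu(x-x')\,\hat\chi^\nu(x',\lambda).
\end{equation*}
Each diagonal entry of $\partial_x\Theta_\nu(x-x')$ equals $(t_\nu^{(k)})^2\delta(x-x')=\delta(x-x')$, so $\partial_x\Theta_\nu(x-x')=\openone\delta(x-x')$, and $\chi^\nu(x,\lambda)\hat\chi^\nu(x,\lambda)=\openone$ yields \eqref{eq:R3.1}.

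The principal technical obstacle is the sign bookkeeping: verifying that the choice of $t_\nu^{(k)}$ in Table~\ref{tab:2} is internally consistent with Table~\ref{tab:1} and with the $\mathbb{Z}_3\times\mathbb{Z}_2$ symmetries, so that for every sector $\Upsilon_\nu$ and every matrix element $(k,j)$ the exponent truly has nonpositive real part on the chosen half-line. Once this combinatorial check is done, items \emph{(1)}--\emph{(4)} fall out from the same uniform estimates, with \emph{(1)} and the characterization of poles following from the explicit meromorphic structure of $u$ and $u^{-1}$ described in Remark~\ref{rem:3}.
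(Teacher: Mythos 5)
Your treatment of items (2)--(4) follows essentially the same route as the paper: the signs $t_\nu^{(k)}$ of Table \ref{tab:2} force exponential decay of the summands $X^\nu_{ip}\,e^{-i\lambda J_p(x-x')}\,\Theta_{\nu;pp}(x-x')\,\hat X^\nu_{pj}$ for $\lambda$ in the interior of $\Upsilon_\nu$, the decay degenerates to pure oscillation on $b_\nu$, and (4) is the Leibniz computation using $L(\lambda)\chi^\nu=0$ and $\partial_x\Theta_\nu(x-x')=\openone\,\delta(x-x')$. Two small technical points there: the exponent in your entrywise estimate should carry the \emph{intermediate} index $p$, i.e.\ $e^{-i\lambda J_p(x-x')}$, not a combination of $J_k$ and $J_j$; and a kernel decaying like $e^{-c|x-x'|}$ is not in $L^2(\mathbb{R}^2)$, so boundedness of the operator should be obtained from a Schur-type estimate rather than from ``$L^2$-integrability in $(x,x')$''.

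The genuine gap is in item (1). Each sector $\Upsilon_\nu$ (bounded by the rays $b_\nu$ at angles $\pi(\nu+1)/3$) straddles one of the rays $l_\mu$ (at angles $\pi(2\mu+1)/6$) on which the FAS are discontinuous: on the two sides of that interior ray the kernel is built from two \emph{different} solutions $X^{\mu-1}$ and $X^{\mu}$ of the RHP. Analyticity of $R^\nu$ throughout $\Upsilon_\nu$ therefore requires proving that the two one-sided limits of the kernel on $l_\mu$ coincide, and your appeal to ``the correct sign of the exponent'' cannot deliver this --- sign conditions control decay, not analytic continuation across a line of discontinuity of the FAS. The paper's mechanism is that the jump matrix $G_\mu=\hat S_\mu^+ S_\mu^-$ of the RHP lies in the $SL(2)$ subgroup attached to the single root $\alpha$ with $\im\lambda\,\alpha(J)=0$ on $l_\mu$, and this subgroup commutes with $\Theta_\nu(x-x')$; hence $\chi^{\mu-1}\Theta_\nu\hat\chi^{\mu-1}=\chi^{\mu}G_\mu\Theta_\nu\hat G_\mu\hat\chi^{\mu}=\chi^{\mu}\Theta_\nu\hat\chi^{\mu}$ and the jump cancels. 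Without this commutation argument (or an equivalent one) item (1) is not established. Your identification of the poles with $\pm\lambda_j^{\pm}$ via the meromorphic structure of $u$ and $u^{-1}$ is fine and matches Remark \ref{rem:3} and Lemma \ref{lem:1}.
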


\begin{proof}[Idea of the proof] {}

\begin{enumerate}

\item First we shall prove that $R^\nu (x , x ',\lambda )$ has no jumps
on the rays $l_\nu$. From Section 3 we know that $X^\nu (x ,\lambda)$ and
therefore also $\chi^\nu (x ,\lambda)$ are analytic for $\lambda\in\Omega_\nu$.
So we have to show that the limits of $R^\nu (x , x ',\lambda )$ for $\lambda\to l_\nu$
from $\Upsilon_\nu$ and $\Upsilon_{\nu-1}$ are equal. Let show that for $\nu=0$. From
the asymptotics (\ref{eq:xi-as}) and from the RHP (\ref{eq:*-nu}) we have:
\begin{equation}\label{eq:2}\begin{aligned}
\chi^0 (x ,\lambda) = \chi^1 (x ,\lambda) G_1(\lambda), \qquad G_1 (\lambda)= \hat{S}_1^+(\lambda) S_1^-(\lambda), \quad \lambda\in l_1,
\end{aligned}\end{equation}
where $G_1(\lambda) $ belongs to an $SL(2)$ subgroup of $SL(3)$ and is such that it commutes with $\Theta_1(x  -x ')$.
Thus we conclude that
\begin{equation}\label{eq:3}\begin{aligned}
R_1(x ,x ' ,\lambda e^{+i0}) = R_1(x ,x ' ,\lambda e^{-i0}), \qquad \lambda\in l_1 .
\end{aligned}\end{equation}
Analogously we prove that $R_\nu(x ,x ' ,\lambda e^{+i0})$ has no jumps on the other rays $l_\nu$.

The jumps on the rays $b_\nu$ appear because of two reasons: first, because of the functions $\Theta_\nu (x  -x ')$
and second, it is easy to check that for $\lambda\in b_\nu$ the kernel $R_\nu(x ,x ' ,\lambda )$
 oscillates for $x ,x '$ tending to $\pm \infty$.
Thus on these lines the resolvent is unbounded integral operator.

\item Assume that $\lambda \in \Upsilon_\nu $ and consider the asymptotic behavior
of $R^\nu (x , x ',\lambda ) $ for $x , x '\to\infty  $. From equations
(\ref{eq:xi-as}) we find that
\begin{eqnarray}\label{eq:R3.2}
R_{ij}^\nu (x , x ',\lambda ) &=& \sum_{p=1}^{n} X^\nu_{ip}(x ,\lambda )
e^{-i\lambda J_p(x - x ')} \Theta_{\nu;pp}(x  -x ') \hat{X}^{\nu}_{pj}(x ',\lambda )
\end{eqnarray}

Due to the fact that $\chi_\nu (x ,\lambda ) $ has the special triangular asymptotics
for $x \to\infty  $ and $\lambda \in \Upsilon_\nu $ and for the correct choice of
$\Theta_\nu(x  -x ') $ (\ref{eq:Theta}) we check that the right hand side of
(\ref{eq:R3.2}) falls off exponentially for $x \to\infty  $ and arbitrary
choice of $x ' $. All other possibilities are treated analogously.

\item For $\lambda \in b_\nu$ the arguments of 2) can not be applied
because the exponentials in the right hand side of (\ref{eq:R3.2})
$\im \lambda =0  $ only oscillate. Thus we conclude that
$R^\nu(x , x ',\lambda ) $ for $\lambda \in b_\nu $ is only a bounded
function and thus the corresponding operator $R(\lambda ) $ is an
unbounded integral operator.

\item The proof of eq. (\ref{eq:R3.1}) follows from the fact that
$L(\lambda )\chi_\nu(x ,\lambda )=0 $ and
\begin{equation}\label{eq:R4.1}
\frac{\partial \Theta(x  -x ')}{ \partial x  }= \openone \delta (x  -x ').
\end{equation}
\end{enumerate}
\end{proof}

\begin{lemma}\label{lem:1}
The poles of $R^\nu (x , x ',\lambda)$ coincide with
the poles of the dressing factors $u(x ,\lambda)$ and its inverse $u^{-1}(x ,\lambda)$.
\end{lemma}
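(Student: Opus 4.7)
The plan is to exploit the factorization $\chi^\nu(x,\lambda)=u(x,\lambda)\chi_0^\nu(x,\lambda)$ from equation~(\ref{eq:1}) to rewrite the dressed resolvent as $R^\nu(x,x',\lambda)=u(x,\lambda)\,R_0^\nu(x,x',\lambda)\,u^{-1}(x',\lambda)$, where $R_0^\nu(x,x',\lambda)=\frac{1}{i}\chi_0^\nu(x,\lambda)\Theta_\nu(x-x')\hat{\chi}_0^\nu(x',\lambda)$ is the resolvent kernel built from the regular (naked) FAS. All the $\lambda$-singularities of $R^\nu$ are then concentrated in the two dressing factors, and the problem reduces to comparing the pole sets.

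First I would establish the easy inclusion: since $\chi_0^\nu(x,\lambda)$ is a regular FAS, it is analytic in $\Omega_\nu$ with $\det\chi_0^\nu\neq 0$, so $\hat{\chi}_0^\nu(x',\lambda)$ is also analytic there. Hence $R_0^\nu(x,x',\lambda)$ carries no finite poles inside $\Upsilon_\nu$, and every $\lambda$-pole of $R^\nu$ must originate either from $u(x,\lambda)$ or from $u^{-1}(x',\lambda)$. This gives $\{\text{poles of }R^\nu\}\subseteq\{\text{poles of }u\}\cup\{\text{poles of }u^{-1}\}$.

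For the reverse inclusion I would compute residues directly. By Remark~\ref{rem:3} the pole sets of $u$ (at $\lambda_j$) and $u^{-1}$ (at $-\lambda_j^*$ and their $\omega^p$-images) are disjoint, so at a pole $\lambda_j$ of $u$ the factor $u^{-1}(x',\lambda)$ remains regular and invertible. Using the rank-one residue $\mathop{\mathrm{Res}}_{\lambda=\lambda_j}u(x,\lambda)=|n_j(x,t)\rangle\langle m_j^T(x,t)|$ from the ansatz~(\ref{eq:6NuNp''m}), one finds $\mathop{\mathrm{Res}}_{\lambda=\lambda_j}R^\nu(x,x',\lambda)=\frac{1}{i}|n_j(x,t)\rangle\langle m_j^T(x,t)|\chi_0^\nu(x,\lambda_j)\Theta_\nu(x-x')\hat{\chi}_0^\nu(x',\lambda_j)u^{-1}(x',\lambda_j)$, which is a rank-one matrix distribution. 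An entirely analogous computation at the poles of $u^{-1}$ (using the symmetry relation that determines the residues of $u^{-1}$) handles the other half of the statement.

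The main obstacle is the non-cancellation step: one must verify that this rank-one residue does not vanish identically after multiplication by the surrounding $x$-dependent matrices. The argument reduces to showing that the row vector $\langle m_j^T|\chi_0^\nu(x,\lambda_j)$ is not identically zero as a function of $x$, which holds because $\chi_0^\nu$ is a nondegenerate fundamental matrix and $\langle m_j^T|\neq 0$ by construction; the diagonal step-function matrix $\Theta_\nu(x-x')$ is nonzero on a half-line in each slot, and $\hat{\chi}_0^\nu(x',\lambda_j)u^{-1}(x',\lambda_j)$ is invertible. The disjointness of the pole sets of $u$ and $u^{-1}$ noted in Remark~\ref{rem:3} is the critical ingredient that prevents a compensating zero of the other dressing factor from accidentally removing the pole.
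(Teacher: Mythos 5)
Your overall strategy is the same as the paper's: the paper's proof is the one\-/line observation that $R^\nu = \frac{1}{i}\,u(x,\lambda)\,\chi_0^\nu(x,\lambda)\Theta_\nu(x-x')\hat{\chi}_0^\nu(x',\lambda)\,u^{-1}(x',\lambda)$, so that all $\lambda$-singularities sit in $u$ and $u^{-1}$, combined with Remark~\ref{rem:3}. Your "easy inclusion" paragraph reproduces exactly this and is fine. The problem is in the reverse inclusion, which you rightly feel obliged to address but which you justify with a false statement: you claim that $\hat{\chi}_0^\nu(x',\lambda_j)\,u^{-1}(x',\lambda_j)$ is invertible. It is not. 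From $u(x',\lambda)u^{-1}(x',\lambda)=\openone$, taking the residue at $\lambda=\lambda_j$ gives $A_j(x')\,u^{-1}(x',\lambda_j)=0$, i.e.\ $\langle m_j^T(x')|\,u^{-1}(x',\lambda_j)=0$, so $u^{-1}(x',\lambda_j)$ is necessarily degenerate at every pole of $u$ (equivalently, $\det u^{-1}$ vanishes precisely where $\det u$ has poles). The disjointness of the pole sets of $u$ and $u^{-1}$ does not save you here: the compensating degeneracy you are worried about is not a pole of $u^{-1}$ colliding with $\lambda_j$, but a zero of $\det u^{-1}$, and that zero is always present at $\lambda_j$.

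This matters because the cancellation you dismiss actually occurs in the "naive" computation. Since $\langle m_j^T(x)|$ solves the adjoint naked equation (\ref{eq:n-km-k}), the row vector $\langle c_j|:=\langle m_j^T(x)|\chi_0^\nu(x,\lambda_j)$ is $x$-independent, and $\langle c_j|\,\hat{\chi}_0^\nu(x',\lambda_j)=\langle m_j^T(x')|$, which annihilates $u^{-1}(x',\lambda_j)$ by the identity above. Hence if $\Theta_\nu(x-x')$ were replaced by $\openone$ the residue of $R^\nu$ at $\lambda_j$ would vanish identically. The only thing that can prevent the cancellation is the diagonal projector $\Theta_\nu(x-x')$ sitting between $\langle c_j|$ and $\hat{\chi}_0^\nu(x',\lambda_j)$: one must argue that $\langle c_j|\Theta_\nu(x-x')\hat{\chi}_0^\nu(x',\lambda_j)$ is \emph{not} proportional to $\langle m_j^T(x')|$ for all $x,x'$. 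Your proposal never engages with this, and its stated reason for non-vanishing is the opposite of the truth. (The paper itself does not supply this argument either --- it simply declares the lemma immediate from the definition and Remark~\ref{rem:3} --- but a proof that claims to establish the reverse inclusion must get this step right.) A minor additional slip: Remark~\ref{rem:3} places the poles of $u^{-1}$ at $-\lambda_l\omega^p$, $-\lambda_r\omega^p$, $-\lambda_r^*\omega^p$, not at "$-\lambda_j^*$" as written, though as a set these agree because the $\lambda_l$ are real.
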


\begin{proof}
The proof follows immediately from the definition of $R^\nu (x , x ',\lambda)$ and from Remark \ref{rem:3}.
\end{proof}

Thus we have established that dressing by the factor $u(x ,\lambda)$, we in fact add to the discrete spectrum
of the Lax operator $6N_1+12N_2$ discrete eigenvalues; for $N_1=N_2=1$ they are shown on Figure \ref{fig:1}.

\section{Discussion and Conclusions}
On the example of the KKE we analyzed the relation between the scalar ordinary differential operators and the
Kac-Moody algebras. Using the dressing method we established that KKE and its gauge equivalent MKdV have two types
of one-soliton solutions, which generically {\em are not} traveling wave solutions. The dressing method adds discrete
eigenvalues to the spectrum of $L$ which comes in sextuplets for each soliton of first type and in dodecaplets (12-plets) for the
solitons of second type. Still open is the question of constructing {\em regular} soliton solutions and to study
the pro perties of the generic one soliton solutions that {\em are. not} traveling waves.

We have constructed the FAS of $L$ which satisfy a RHP on the set of rays $l_\nu$.
We also constructed the resolvent of the Lax operator and proved that its continuous spectrum
fills up the rays $b_\nu$ rather than $l_\nu$. From Figure \ref{fig:1} we see that the eigenvalues
corresponding to the  solitons of first type lay on the continuous spectrum of $L$. This explains
why the solitons of first type are singular functions.

Using the explicit form of the resolvent $R^\nu (x , x ',\lambda)$ and the contour integration method one can  derive
the completeness relation of the FAS.

As a further development we note, that one can  use the expansions over the squared solutions \cite{Tih}
to derive the action-angle variables of the NLEE in the hierarchy. These expansions are, in fact,
spectral decompositions of the relevant recursion operators $\Lambda_k$ which in addition possess
important geometrical properties \cite{Yan1,YanVi,YanVi2}.

\section*{Acknowledgments}
The author is grateful to Dr. Alexander Stefanov and to Prof. R. Ivanov for useful discussions
and help in preparing the manuscript.

\appendix
\section{The basis of $A_2^{(1)}$}

The basis of $A_2^{(1)}$ is obtained from the Cartan-Weyl basis of $A_2$ (\ref{eq:CWb}) by taking the average
with the Coxeter automorphism. In this case the  Coxeter automorphism is represented by
$\tilde{C}X \tilde{C}^{-1}$,
\[ \tilde{C}  = \left(\begin{array}{ccc} 0 & 0 & 1 \\ 1 & 0 & 0\\ 0 & 1 & 0
 \end{array}\right) .\]

\begin{equation}\label{eq:Basis}\begin{aligned}
B_1^{(0)}&= \left(\begin{array}{ccc} 0 & 1 & 0 \\ 0 & 0 & 1\\ 1 & 0 & 0  \end{array}\right), &\;
B_2^{(0)}&= \left(\begin{array}{ccc} 0 & 0 & 1 \\ 1 & 0 & 0 \\ 0 & 1 & 0  \end{array}\right), \\
B_1^{(1)}&= \left(\begin{array}{ccc} 0 & 1 & 0 \\ 0 & 0 & \omega^2 \\ \omega & 0 & 0  \end{array}\right), &\;
B_2^{(1)}&= \left(\begin{array}{ccc} 0 & 0 & 1 \\ \omega^2 & 0 & 0\\ 0 & \omega & 0  \end{array}\right),  &\;
B_3^{(1)}&= \left(\begin{array}{ccc} \omega^2 & 0 & 0 \\ 0 & \omega & 0\\ 0 & 0 & 1  \end{array}\right), \\
B_1^{(2)}&= \left(\begin{array}{ccc} 0 & 1 & 0 \\ 0 & 0 & \omega \\ \omega^2 & 0 & 0  \end{array}\right), &\;  B_2^{(2)}&=
\left(\begin{array}{ccc} 0 & 0 & 1 \\ \omega & 0 & 0\\ 0 & \omega^2 & 0  \end{array}\right),  &\;  B_3^{(2)}&=
\left(\begin{array}{ccc} \omega & 0 & 0 \\ 0 & \omega^2 & 0\\ 0 & 0 & 1  \end{array}\right).
\end{aligned}\end{equation}

\end{document}